\documentclass[letterpaper, 10 pt, conference]{ieeeconf}  

\IEEEoverridecommandlockouts                              
\usepackage{graphics} 
\usepackage{epsfig}
 
\usepackage{amsthm}
\usepackage{xcolor}
\usepackage{times} 
\usepackage{amsmath} 
\usepackage{amssymb}  
\usepackage{float}
\usepackage{optidef}
\usepackage{cite}

\newtheorem{theorem}{Theorem}[section]
\newtheorem{lemma}[theorem]{Lemma}
\newtheorem{proposition}[theorem]{Proposition}

\newtheorem{assumption}{Assumption}
\newtheorem{example}{Example}
\newcommand{\NE}{\mathrm{NE}}
\newcommand{\OC}{\mathrm{OC}}
\title{\LARGE \bf
Robust Information Design with Heterogeneous Beliefs in Bayesian Congestion Games
}

\author{Yuwei Hu, Bryce L. Ferguson
\thanks{Yuwei Hu is a Ph.D. student with the Department of Electrical and Computer Engineering,
Thayer School of Engineering at Dartmouth College, Hanover, NH, USA
        {\tt\small Yuwei.Hu-2.TH@dartmouth.edu}}%
\thanks{Bryce L. Ferguson is Faculty in the Department of Electrical and Computer Engineering,
Thayer School of Engineering at Dartmouth College, Hanover, NH, USA
        {\tt\small Bryce.L.Ferguson@dartmouth.edu}}%
}

\begin{document}

\maketitle
\thispagestyle{empty}
\pagestyle{empty}

\begin{abstract}
In many engineered systems, agents make decisions under incomplete information, creating opportunities for a planner to influence decentralized behavior through signaling. We study how such signaling can be designed in parallel-network, affine latency congestion games when users may not interpret recommendations using the same beliefs assumed by the planner. To do so, we consider Bayesian congestion games with private recommendations and formulate a robust information design problem in which obedience must hold uniformly over a neighborhood of a nominal prior. This addresses the previously uncharacterized issue of whether obedience itself remains reliable under belief heterogeneity, rather than only under the single prior used at the design stage. We characterize policy-level robustness radii, identify regimes in which the robust obedience region remains nonempty, and analyze the resulting robustness--performance tradeoff through a robust value function whose optimal cost is monotone in the robustness requirement and whose local sensitivity is governed by the active obedience constraints.
\end{abstract}

\section{Introduction}

Large-scale engineered systems often involve many agents making decisions under incomplete information. Transportation networks provide a natural example: drivers do not a priori observe the realized traffic state, yet their ultimate routing decisions are influenced by their expected travel times.
Through these routing decisions, drivers interact by congesting roads and collectively affect overall system performance~\cite{beckmann1956transportation,roughgarden2002selfish}. More broadly, similar informational asymmetries arise across many socio-technical systems in which large populations make partially informed decisions~\cite{sandholm2010population}.
An emerging mechanism for coordinating collective behavior involves a well informed operator sharing recommendations, platform guidance, or other partial signals to agents who do not have direct access to the underlying state of the world. Such informational asymmetries create an opportunity for a planner or platform to shape behavior through strategic signaling, often called information design~\cite{kamenica2011bayesian,bergemann2016information,cianfanelli2023information}.

A central challenge in information design, and social influencing control more broadly, is that users need not interpret messages using the same beliefs assumed by the planner. In practice, an information signaling policy may be designed relative to a nominal prior, while at deployment users hold heterogeneous beliefs about the state~\cite{alonso2016bayesian}. In traffic routing settings, this issue is especially important as users must decide whether to comply with a recommendation without directly observing realized network conditions, and compliance itself can depend on how the recommendation is interpreted~\cite{chorus2009traveler}. A policy that is interpreted correctly under the nominal prior used at the design stage may therefore fail to remain compliant when evaluated under nearby but different beliefs. This creates a robustness question that is specific to information design: how much belief heterogeneity can be tolerated before obedience is lost, and what is the cost of designing policies that remain reliable under such heterogeneity?

To study these questions, we consider information design in Bayesian congestion games with private recommendations~\cite{cianfanelli2023information,ferguson2024signaling}. A planner observes the realized state of nature and recommends routes to users, who then decide whether to comply. In the case that all users comply, the policy is termed \emph{obedient}. We focus on the case in which the signaling policy is designed relative to a nominal prior, but users may assess compliance under nearby beliefs. This motivates a notion of \emph{robust obedience}, under which incentive compatibility must hold uniformly over a neighborhood of the nominal prior. The resulting robust information design problem asks not only whether a signaling policy performs well, but whether it remains implementable when users interpret the same recommendations through slightly different beliefs. Our model builds on the classical congestion and routing literature~\cite{Wardrop1952} and on game-theoretic models of congestion effects and equilibrium behavior~\cite{rosenthal1973class}, as well as on recent work on information design in nonatomic routing games~\cite{Zhu2022}.

Information design and persuasion have been studied extensively in economics, control, and game theory~\cite{kamenica2011bayesian,bergemann2016information,chremos2024mechanism}. Recent work has examined signaling in Bayesian congestion and routing
games~\cite{cianfanelli2023information,ferguson2024signaling,Zhu2022,
Wu2017,gould_information_2023,Massicot2022,MassicotLangbort2025}, while other papers study heterogeneous priors in persuasion models~\cite{alonso2016bayesian,HuWeng2021} and robustness or belief perturbation in routing environments~\cite{verbree2024inferring,como2013robust}. Recent work also considers learning-based robust persuasion under uncertainty about receiver belief formation~\cite{BangMalikopoulos2025}. Together, these works highlight that both information structure and heterogeneous beliefs can significantly shape equilibrium outcomes in networked environments. However, these works do not characterize when obedience itself remains robust under heterogeneous beliefs in congestion networks, nor do they connect this question to both the feasibility of obedient signaling and the planner's objective value.

In this work, we study robust obedience in Bayesian congestion games under heterogeneous beliefs along two complementary dimensions. First, for a fixed signaling policy, we characterize a robustness radius within which obedience is preserved, and at the design level we study when the robust obedience region remains nonempty. This includes a benchmark regime with robustness for all uncertainty radii, as well as constrained settings with explicit certified ranges. Second, we study the planner's robustness--performance tradeoff through a robust value function, showing that the value of the robust information design problem is monotone in the robustness requirement and that its local sensitivity is governed by the active obedience constraints. These results provide a tractable framework for quantifying how belief heterogeneity affects both the implementability and the cost of information design in Bayesian congestion games.

\section{PROBLEM SETUP}

\subsection{Bayesian Congestion Games}
\label{subsec:problem_setup_A}

To model traffic routing with possibly random and uncertain delays, we consider a class of Bayesian congestion games in which edge latency functions are random variables~\cite{cianfanelli2023information}.
Let $G = (V,E)$ denote a directed network; in this work, we consider parallel networks, or those where a finite number of directed edges connect a source node to a destination node, i.e., $V = \{s,d\}$ and $E=\{1,\dots,|E|\}$.
Without loss of generality, we assume unit total demand and a nonatomic population~\cite{kamenica2011bayesian}. Aggregate flows can therefore be identified with elements of the simplex \(\Delta(E)\). Let \(f\in\Delta(E)\) denote a flow, where \(f_e\) is the mass routed on edge \(e\in E\).

As the demand on an edge increases, each user traveling via that edge experiences increased cost due to congestion.
In traffic, for example, greater numbers of vehicles sharing a road segment reduce driving speed and thus increases travel latency.
However, the exact congestion characteristics of a road segment vary based on exogenous conditions such as weather or prior traffic accidents. Accordingly, we let \(\Omega\) denote a finite set of states of nature, and assume that a state \(\omega \in \Omega\) is drawn from a prior distribution
\(
\mu_0 \in \Delta(\Omega),
\)
which determines the network's congestion characteristics.

To model the realized congestion costs, we consider that each edge $e \in E$ is endowed with a latency function

\begin{equation}
\label{eq:affine_latency}
\ell_e^\omega(f_e) = a_e^\omega f_e + b_e^\omega,
\qquad
a_e^\omega > 0,\; b_e^\omega \ge 0.
\end{equation}
which, for each $\omega \in \Omega$, is non-negative and non-decreasing to capture the increased cost at an edge caused by congestion.
The assumption that $\ell_e$ is affine is a frequent starting point in the congestion games literature, where affine latency is often used as a baseline model. Here, \(b_e^\omega\) can be interpreted as the free-flow travel time, while \(a_e^\omega\) captures the sensitivity of delay to congestion. It is also reasonable in some traffic settings when focusing on a local operating regime, where delay increases approximately linearly with flow. This is consistent with empirical evidence in moderate-congestion regimes~\cite{kreindler2024peak}.

When users route themselves so as to minimize their own expected cost, the resulting emergent behavior is a \emph{Nash equilibrium} (or, synonymously, a user equilibrium) of the game induced by the expected latency function on each edge~\cite{Wardrop1952,como2013stability,ferguson2024signaling}. In particular, a flow \(f^{\NE}\in\Delta(E)\) is called a Nash equilibrium flow if no user assigned to a positively used edge can strictly decrease cost by unilaterally deviating to another edge. Equivalently, for every edge \(e\in E\),
\(
f^{\NE}_e>0
\Longrightarrow
\ell_e(f^{\NE}_e)\le \ell_{e'}(f^{\NE}_{e'}),
 \forall e'\in E.
\)

To improve this performance, research has often studied how monetary incentives can be used to alter users' decisions and promote Nash equilibria of lower system cost.
In this work, we consider information signaling mechanisms as a means to alter collective behavior, whereby the asymmetry between the system operator's and the system users' knowledge of the realized state of nature $\omega$ introduces the opportunity to selectively reveal information so as to shape posterior beliefs and resulting flows in the network.

\subsection{Information Design}
\label{subsec:information_design}

To understand the role of information asymmetry in our setting, suppose the system operator directly observes the realized state \(\omega \in \Omega\), while the users do not. Instead, the users possess a belief over the realization of \(\omega\). In the nominal design problem, the planner takes this belief to be \(\mu_0 \in \Delta(\Omega)\), from which \(\omega\) is drawn.
After observing the realization of $\omega$, the operator sends private recommendations to
the users to allow them to form new, posterior beliefs.
Upon receiving a recommendation, each user acts by selecting their path to travel along.
We distinguish between a recommendation and an action by
writing
\[
r \in E \quad \text{(recommendation)},
\qquad
a \in E \quad \text{(action)}.
\]
Thus, for an individual user, $r$ is the edge recommended by the operator, while $a$ is the edge ultimately chosen by the user.
We consider the private signaling setting, where the system operator can provide different, private recommendations to different users.

Our focus on recommendation-based signaling policies is without loss of
generality~\cite{bergemann2016information}. Since the designer is allowed to commit to an information
structure and we do not impose any additional exogenous restriction on
the class of feasible signal structures, one may restrict attention to
direct policies in which the signals sent to users are identified with
action recommendations. In other words, rather than considering an
arbitrary signal space together with a separate decoding step, it is
enough to work directly with recommendation rules. This simplification
does not reduce the set of attainable outcomes for the designer; it only
removes additional signal structure that is immaterial for the analysis.
To preserve anonymity across users, we consider a signaling rule that
does not depend on user identity. Formally, the operator commits to a signaling
policy $(\pi,X)$, where
\begin{equation}
\label{eq:policy_definition}
\pi = \{\pi_\omega\}_{\omega\in\Omega},
\qquad
\pi_\omega \in \Delta(X),
\end{equation}
and $X\subseteq\mathbb{R}_+^{E}$ is a feasible recommendation space.
Here, $X$ is a set of feasible flow profiles. Each
$x\in X$ represents a flow the operator may wish to induce and satisfies
\(
x=(x_e)_{e\in E},\)
and
\(
\sum_{e\in E} x_e = 1.
\)

In the unrestricted case, one could simply take $X=\Delta(E)$, so that
every feasible flow over edges is available as a recommendation profile~\cite{Zhu2022,myerson1988mechanism}.
However, we allow for the possibility that some flows are inadmissible
from the designer's perspective, and therefore work with a general
recommendation set $X\subseteq \Delta(E)$.
In Section~\ref{sec:robustness}, we show that if $X$ is a polytope the information design problem can be solved over a finite support, on which we focus much of our analysis.
In general, $X$ is not assumed to be finite, and each
$\pi_\omega$ is a probability measure over
$X$.

The recommendation mechanism is as follows: given $\omega$, the
operator first samples a recommendation profile $x\in X$ according to
$\pi_\omega$. Conditional on $x$, each user independently receives a
recommended edge $r$ according to the weight (and with probability) $x_r$. Therefore, given $\omega$ is realized, the
probability that a user receives recommendation $r=e$ is
\begin{equation}
\label{eq:recommendation_probability}
\mathbb{P}[r=e\mid \omega]
=
\int_X x_e\, d\pi_\omega(x).
\end{equation}

In designing an effective signaling policy, the system operator seeks
to choose \(\pi\) so that no user is incentivized to deviate from
the recommendation they receive. Policies with this property are termed
\emph{obedient}. Since the signaling policy is common knowledge, a user
who receives recommendation \(r\) can compare the expected cost of
following that recommendation with the expected cost of deviating to any
other edge \(a\in E\). By the revelation principle for information design,
it is without loss of generality to restrict attention to direct
recommendation policies satisfying the corresponding obedience
constraints, since any outcome implementable by an arbitrary signaling
scheme can be implemented by a direct recommendation rule that induces
the same outcome distribution and is obedient on path
\cite{kamenica2011bayesian,bergemann2016information,zu2021learning}.

For a signaling policy $(\pi,X)$ and a belief $\mu\in\Delta(\Omega)$,
define the obedience slack for each ordered pair $(r,a)$ with $r\neq a$
by
\begin{equation}
\label{eq:obedience_slack}
A_{r,a}(\pi;\mu,X)
:=
\sum_{\omega\in\Omega}
\mu(\omega)\int_X
x_r\bigl(\ell_a^\omega(x_a)-\ell_r^\omega(x_r)\bigr)\,d\pi_\omega(x).
\end{equation}
This quantity measures the expected cost difference between deviating to
action $a$ and obeying recommendation $r$, under signaling policy $\pi$,
belief $\mu$, and recommendation space $X$.

Let
\(
\mathcal J:=\{(r,a)\in E\times E:\ r\neq a\}
\)
denote the set of ordered deviation pairs. We say that $\pi$ is
obedient under belief $\mu$ on recommendation space $X$ if
\mbox{\(
A_{r,a}(\pi;\mu,X)\ge 0
\)} for all \((r,a)\in\mathcal J.
\)
Under the full-support nominal prior
\(\mu_0\in\operatorname{ri}\Delta(\Omega)\), the corresponding
obedience region is
\begin{equation}
\label{eq:nominal_region}
\OC(\mu_0;X)
=
\left\{
\pi :
A_{r,a}(\pi;\mu_0,X)\ge 0,\ \forall (r,a)\in\mathcal J
\right\}.
\end{equation}

Given a signaling policy $\pi$, the planner's expected system cost
under the nominal prior $\mu_0$ and recommendation space $X$ is
\begin{equation}
\label{eq:cost_function}
C(\pi;\mu_0,X)
:=
\sum_{\omega\in\Omega}\mu_0(\omega)\int_X
\sum_{e\in E}x_e\,\ell_e^\omega(x_e)\,d\pi_\omega(x),
\end{equation}
where $\sum_{e\in E}x_e\,\ell_e^\omega(x_e)$ is the social cost
induced by recommendation profile $x$ in state $\omega$.

The nominal information design problem is thus
\begin{equation}
\label{eq:nominal_problem}
V^*(\mu_0;X)
:=
\min_{\pi\in\OC(\mu_0;X)} C(\pi;\mu_0,X).
\end{equation}
That is, under the prior $\mu_0$, the planner chooses a signaling policy
that minimizes expected system cost subject to the obedience constraints.
The primary focus of this work is addressing how users' possessing incorrect and heterogeneous beliefs on the distribution of the state of nature affects the information design problem in congestion routing.
\subsection{Heterogeneous Beliefs and Robust Obedience}
\label{subsec:robust_obedience}

We now allow for the possibility that users' beliefs over the distribution of $\omega$ differ from the
prior $\mu_0$. The point here is not
that the planner is trying to estimate a hidden ``true'' prior.
Instead, the issue is that users may interpret the same signaling rule
through beliefs that are different from the one used in the planner's
design model.
To capture the uncertainty around users' beliefs, we consider a neighborhood of feasible beliefs
around $\mu_0$. For a given radius $\varepsilon\ge 0$, define
\begin{equation}
\label{eq:uncertainty_set}
U_\varepsilon(\mu_0)
:=
\left\{
\mu\in\Delta(\Omega):\|\mu-\mu_0\|\le \varepsilon
\right\},
\end{equation}
where \(\|\cdot\|\) is any norm on \(\mathbb{R}^{|\Omega|}\).
Throughout, \(\|\cdot\|_*\) denotes the dual norm associated with
\(\|\cdot\|\). Unless otherwise specified, the results below hold for an
arbitrary choice of norm.

The robust obedience region is the set of signaling policies $\pi$ that remain obedient under any belief $\mu \in U_\varepsilon(\mu_0)$, or
\begin{equation}
\label{eq:robust_region}
\overline{\OC}_\varepsilon(\mu_0;X)
:=
\bigcap_{\mu\in U_\varepsilon(\mu_0)} \OC(\mu;X).\nonumber
\end{equation}
In particular,
\(\overline{\OC}_0(\mu_0;X)=\OC(\mu_0;X)\).

In general, each user in the population may have their own belief.
In Proposition~\ref{prop:heterogeneous_closure_short} we show a direct interpretation of robust obedience
in a heterogeneous population.
\begin{proposition}[Robust obedience under heterogeneous beliefs]
\label{prop:heterogeneous_closure_short}
Suppose a heterogeneous population of agents is indexed by $\lambda \in [0,1]$,
and each agent $\lambda$ has belief $\mu^\lambda \in U_\varepsilon(\mu_0).$
If $\pi\in \overline{\OC}_\varepsilon(\mu_0;X)$,
then every individual agent $\lambda \in [0,1]$ is obedient under $\pi$, and therefore any
aggregate flow induced by such a heterogeneous population is obedient
as well.
\end{proposition}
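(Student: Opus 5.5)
The argument unpacks the definition of $\overline{\OC}_\varepsilon(\mu_0;X)$ agent by agent, then aggregates. Fix $\pi\in\overline{\OC}_\varepsilon(\mu_0;X)$ and an arbitrary agent $\lambda\in[0,1]$ with belief $\mu^\lambda\in U_\varepsilon(\mu_0)$. Since $\overline{\OC}_\varepsilon(\mu_0;X)=\bigcap_{\mu\in U_\varepsilon(\mu_0)}\OC(\mu;X)$, membership gives $\pi\in\OC(\mu^\lambda;X)$. By definition, this means $A_{r,a}(\pi;\mu^\lambda,X)\ge 0$ for all $(r,a)\in\mathcal J$.

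The first real step is to interpret this slack as agent $\lambda$'s individual incentive condition. Agent $\lambda$ receives $r$ with probability $\sum_\omega \mu^\lambda(\omega)\int_X x_r\,d\pi_\omega(x)$, which follows from \eqref{eq:recommendation_probability} evaluated under $\mu^\lambda$. Their posterior expected cost of choosing $a$ after seeing $r$ is then
\[
\frac{\sum_\omega\mu^\lambda(\omega)\int_X x_r\,\ell_a^\omega(x_a)\,d\pi_\omega(x)}{\sum_\omega\mu^\lambda(\omega)\int_X x_r\,d\pi_\omega(x)}.
\]
Here the flow $x$ is the aggregate recommendation profile. A nonatomic agent's deviation does not change it, so the latency on $a$ is $\ell_a^\omega(x_a)$. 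Whenever $r$ is received with positive probability, multiplying by the (positive) denominator shows that "obey $r$ rather than deviate to $a$" is equivalent to $A_{r,a}(\pi;\mu^\lambda,X)\ge 0$ as defined in \eqref{eq:obedience_slack}. Recommendations with zero probability under $\mu^\lambda$ are off-path and impose no constraint. So every agent $\lambda$ weakly prefers compliance with every recommendation.

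For the aggregate claim, each agent in the continuum follows their recommendation. The recommendation mechanism assigns edges according to $x$ independently of identity, and $\pi$ is anonymous. Hence the realized flow, conditional on $(\omega,x)$, is exactly $x$, which is the flow $\pi$ prescribes. Every mass of users routed on $e$ is there because $e$ was recommended, and the computation above shows no user on $e$ strictly gains by switching, each under their own belief. The aggregate flow therefore coincides with the obedient outcome, with no profitable unilateral deviations.

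The main obstacle is this aggregation step. Because beliefs are heterogeneous, "obedient aggregate flow" must be read as "every user's incentive constraint holds under their own belief", not as a single-belief Nash condition. I would also need a mild measurability assumption on $\lambda\mapsto\mu^\lambda$, or simply note that the per-agent argument is pointwise and needs no integration. With that reading, the result follows immediately from the pointwise argument.
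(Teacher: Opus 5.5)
Your proposal is correct and follows essentially the same route as the paper: membership of $\pi$ in the intersection over $U_\varepsilon(\mu_0)$ gives $A_{r,a}(\pi;\mu^\lambda,X)\ge 0$ for every $(r,a)\in\mathcal J$ and every agent $\lambda$, which yields individual and hence aggregate obedience. Your Bayes-rule derivation, showing that $A_{r,a}\ge 0$ is agent $\lambda$'s posterior incentive constraint, and your clarification of what ``obedient aggregate flow'' means under heterogeneous beliefs are careful elaborations of points the paper treats as definitional, not a different argument.
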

The proof appears in the appendix.

Imposing robust obedience leads to the \emph{robust information design problem}
\begin{equation}
\label{eq:robust_problem}
V^*(\varepsilon;\mu_0,X)
:=
\min_{\pi\in\overline{\OC}_\varepsilon(\mu_0;X)} C(\pi;\mu_0,X).
\end{equation}
Thus, \(V^*(\varepsilon;\mu_0,X)\) is the minimum expected system cost
achievable by a signaling policy that remains obedient for all beliefs
in \(U_\varepsilon(\mu_0)\).
As we consider $X$ to be a given set of constraints,
for notational simplicity, in the remainder of the paper we suppress
the explicit dependence of these objects on \(X\) whenever no confusion
can arise.

The paper develops two main sets of results. Section~\ref{sec:robustness}
studies the robust feasible region and identifies when obedient
signaling remains possible under belief heterogeneity.
We show that
robust feasibility is preserved for all uncertainty radii when the state-wise Nash equilibria are present in $X$, and we characterize explicit certified robustness ranges in
the constrained case. 
Section~\ref{sec:sensitivity} studies how the
robust value function \(V^*(\varepsilon;\mu_0,X)\) changes as the
robustness requirement is tightened. We establish monotonicity of the
value function in $\varepsilon$ and derive a local sensitivity bound, at
differentiability points, in terms of the active obedience constraints
and their conditioning. Together, these results characterize both the
feasibility and the cost of robustness in information design in Bayesian congestion games.

\section{Feasibility of Robust Obedience}
\label{sec:robustness}

This section studies when the robust obedience region
\(\overline{\OC}_\varepsilon(\mu_0)\) remains nonempty and how much
prior heterogeneity can be tolerated before feasibility is lost. For
brevity, we write
\[
\rho(\pi):=\sup\left\{\varepsilon\ge 0:
\pi\in\overline{\OC}_\varepsilon(\mu_0;X)\right\},
\]
as the robust radius of a policy $\pi$ and
suppress the dependence on \(\mu_0\) and \(X\) whenever no confusion
can arise.

We are therefore interested in the largest uncertainty radius that can
still be supported by some signaling policy, namely
\begin{equation}
\label{eq:rho_star_X}
\rho^\star
:=
\sup_{\pi\in\Pi}\rho(\pi).
\end{equation}
Thus \(\rho^\star\) measures the maximal level of belief
misspecification under which the robust obedience region can remain
feasible. We begin with a finite-support reduction, which lets us carry
out the remainder of the analysis on a finite recommendation set.
\begin{proposition}[Finite-support reduction]
\label{prop:finite_support_reduction}
Let \(\pi=\{\pi_\omega\}_{\omega\in\Omega}\) be a signaling policy over
\(X\). Then there exist a finite set \(\widetilde X\subseteq X\) and a
signaling policy
\(\widetilde\pi=\{\widetilde\pi_\omega\}_{\omega\in\Omega}\) with
\(
\widetilde\pi_\omega\in\Delta(\widetilde X),
\forall \omega\in\Omega,
\)
such that
\(
C(\widetilde\pi;\widetilde X)=C(\pi;X),
\)
and
\(
A_{r,a}(\widetilde\pi;\mu)
=
A_{r,a}(\pi;\mu),
\forall (r,a)\in\mathcal J,\ \forall \mu\in\Delta(\Omega).
\)
\end{proposition}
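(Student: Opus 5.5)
The plan is to reduce everything to finitely many moment conditions and then apply Carathéodory's theorem state by state. For each state \(\omega\), the quantities \(A_{r,a}(\pi;\mu)\) and \(C(\pi;X)\) depend on \(\pi_\omega\) only through integrals of finitely many fixed functions of \(x\). Because \(\mu\) enters \(A_{r,a}\) linearly and only through the scalar weights \(\mu(\omega)\), it is enough to preserve, for each \(\omega\) separately, the vector of integrals
\[
m_\omega(\pi_\omega):=\int_X \phi^\omega(x)\,d\pi_\omega(x)\in\mathbb{R}^{|\mathcal J|+1},
\]
where \(\phi^\omega(x)\) has components
\[
\phi^\omega_{r,a}(x)=x_r\bigl(\ell_a^\omega(x_a)-\ell_r^\omega(x_r)\bigr)\quad\text{for }(r,a)\in\mathcal J,
\]
and one last component
\[
\phi^\omega_0(x)=\sum_{e}x_e\ell_e^\omega(x_e).
\]
If \(\widetilde\pi_\omega\) has the same moment vector as \(\pi_\omega\) for every \(\omega\), then \(A_{r,a}(\widetilde\pi;\mu)=A_{r,a}(\pi;\mu)\) for all \(\mu\in\Delta(\Omega)\) simultaneously, and \(C(\widetilde\pi)=C(\pi)\). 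This is the crucial observation, because it removes the quantifier over \(\mu\).

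Next I would note that each \(\phi^\omega\) is a polynomial in \(x\), hence continuous and bounded on \(X\subseteq\Delta(E)\). The moment vector is therefore well defined and lies in the closed convex hull of \(\phi^\omega(X)\). The standard fact I intend to invoke, the finite-dimensional Carathéodory/Richter–Rogosinski theorem, says that the integral of a bounded \(\mathbb{R}^d\)-valued measurable map against a probability measure lies in \(\operatorname{conv}\phi^\omega(X)\), not merely its closure. By Carathéodory it is then a convex combination of at most \(d+1=|\mathcal J|+2\) points \(\phi^\omega(x^{\omega,1}),\dots,\phi^\omega(x^{\omega,d+1})\) with \(x^{\omega,k}\in X\). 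I would set \(\widetilde\pi_\omega=\sum_k\lambda_k^\omega\delta_{x^{\omega,k}}\) and \(\widetilde X=\bigcup_\omega\{x^{\omega,k}\}\), which is finite and contained in \(X\). Finally I would check that viewing \(\widetilde\pi_\omega\) as a measure on \(\widetilde X\) rather than on \(X\) changes none of the integrals.

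The main obstacle is the step asserting that the mean lies in the convex hull itself rather than in its closure, particularly when \(X\) is not closed. I would prove it by induction on the dimension \(d\). Let \(m\) be the mean and suppose \(m\notin\operatorname{ri}\operatorname{conv}\phi(X)\). Then a supporting hyperplane gives a nonzero \(h\) with \(h\cdot(\phi(x)-m)\ge 0\) for all \(x\in X\). Its integral is \(0\), so \(h\cdot(\phi-m)=0\) \(\pi\)-almost everywhere. Hence \(\pi\) is concentrated on \(\{x: \phi(x)\in m+h^\perp\}\), and we recurse in one lower dimension. If \(m\) is in the relative interior, it is automatically in the convex hull, which closes the argument. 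If \(X\) happens to be compact, as in the polytope case emphasized later, this subtlety disappears: \(\phi^\omega(X)\) is compact, so its convex hull is closed and the plain Carathéodory argument suffices.
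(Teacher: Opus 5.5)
Your proposal is correct and matches the paper's proof step for step: you form the per-state vector of cost and deviation integrals, apply Carath\'eodory to get at most $|\mathcal J|+2$ atoms for each $\omega$, and take the union of the supports as $\widetilde X$. Your dimension-reduction argument, which shows that the mean lies in $\operatorname{conv}\phi^\omega(X)$ itself and not only in its closure, also justifies a step that the paper simply asserts as a consequence of continuity.
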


Proposition~\ref{prop:finite_support_reduction} shows that restricting
attention to finite-support signaling policies incurs no loss of
generality for either the nominal design problem or the robust
obedience analysis. The proof is deferred to the Appendix.

Hence, in the remainder of the paper, we work with a fixed finite
recommendation set
\begin{equation}
\label{eq:X_finite_specialized}
X=\{x^{(1)},\dots,x^{(K)}\}\subset \mathbb{R}_+^E,
\end{equation}
where \(x^{(k)}\) denotes the \(k\)-th element of \(X\). To simplify notation, we omit the dependence on \(X\) when it is clear from context.

\subsection{Nash Recommendations and Robust Feasibility}
\label{subsec:nash_special_case}

We first identify a regime in which robust feasibility is never lost.
Suppose that for every state \(\omega\in\Omega\), \(x^{\NE}(\omega)\) is a Nash equilibrium for the game with latency functions $\{\ell_e^\omega\}_{e \in E}$ and that
\(
x^{\NE}(\omega)\in X,
\) for all \(
 \omega\in\Omega.
\)
\begin{proposition}[Nash recommendations imply robust nonemptiness]
\label{prop:nash_nonempty_all_eps}
If \(X\) contains a statewise Nash equilibrium flow $x^\NE(\omega)$ for every
\(\omega\in\Omega\), then
\(
\overline{\OC}_\varepsilon(\mu_0)\neq\varnothing,
\forall \varepsilon\ge 0.
\)
\end{proposition}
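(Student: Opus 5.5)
The plan is to exhibit one explicit policy that is obedient under every belief simultaneously. The natural candidate is the full-information policy that reveals the state through the Nash flow: set $\pi^{\NE}_\omega:=\delta_{x^{\NE}(\omega)}$ for each $\omega\in\Omega$. This is admissible because $x^{\NE}(\omega)\in X$ by assumption. I will show $\pi^{\NE}\in\OC(\mu;X)$ for every $\mu\in\Delta(\Omega)$. Then $\pi^{\NE}\in\bigcap_{\mu\in U_\varepsilon(\mu_0)}\OC(\mu;X)=\overline{\OC}_\varepsilon(\mu_0)$ for every $\varepsilon\ge 0$, which is the claim. In fact this gives $\rho(\pi^{\NE})=\infty$, so $\rho^\star=\infty$.

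The key step is a statewise computation of the slack. Substituting the Dirac policy into \eqref{eq:obedience_slack} gives
\[
A_{r,a}(\pi^{\NE};\mu)=\sum_{\omega\in\Omega}\mu(\omega)\,x^{\NE}_r(\omega)\bigl(\ell^\omega_a(x^{\NE}_a(\omega))-\ell^\omega_r(x^{\NE}_r(\omega))\bigr).
\]
I then claim that each summand is nonnegative, and I split into two cases.
\begin{itemize}
\item If $x^{\NE}_r(\omega)=0$, the summand vanishes.
\item If $x^{\NE}_r(\omega)>0$, the Nash condition stated in Section~\ref{subsec:problem_setup_A} gives $\ell^\omega_r(x^{\NE}_r(\omega))\le\ell^\omega_a(x^{\NE}_a(\omega))$, so the bracket is nonnegative.
\end{itemize}
Since $\mu(\omega)\ge 0$ for every $\omega$, it follows that $A_{r,a}(\pi^{\NE};\mu)\ge 0$ for all $(r,a)\in\mathcal J$ and all $\mu\in\Delta(\Omega)$. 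Note that $\mu$ need not have full support, and the argument never uses the norm defining $U_\varepsilon$.

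The only point needing care is that the latency a deviator experiences. The slack \eqref{eq:obedience_slack} evaluates the deviation cost at the flow $x_a$ of the recommended profile, not at $x_a$ plus the deviator's own mass. In the nonatomic model a single user's deviation does not change the flow, so this matches the Nash definition used in the paper exactly. I therefore expect no real obstacle. The main thing to state explicitly is that the bound holds termwise in $\omega$, which is exactly what makes it independent of the belief $\mu$.
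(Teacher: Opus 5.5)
Your proposal is correct and follows essentially the same route as the paper: the Dirac policy on the statewise Nash flows, termwise nonnegativity of the slack via the Nash condition, and summation against the nonnegative weights $\mu(\omega)$. Two things you make explicit that the paper leaves implicit are the case split on $x^{\NE}_r(\omega)=0$ and the observation that obedience holds for every $\mu\in\Delta(\Omega)$, so that $\rho(\pi^{\NE})=\infty$.
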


\begin{proof}
Define the Nash recommendation policy as
\[
\pi^{\NE}_\omega(x)
=
\mathbf{1}\{x=x^{\NE}(\omega)\},
\qquad
\forall \omega\in\Omega,\ \forall x\in X.
\]
Under the supposition that $X$ contains $x^{\rm NE}(\omega)$ for each $\omega \in \Omega$, the policy is nominally feasible.
Fix any \(\mu\in U_\varepsilon(\mu_0)\) and any ordered pair
\((r,a)\in\mathcal J\). Since \(x^{\NE}(\omega)\) is a Nash
equilibrium flow in state \(\omega\), we have
\[
x_r^{\NE}(\omega)
\Bigl(
\ell_a^\omega(x_a^{\NE}(\omega))
-
\ell_r^\omega(x_r^{\NE}(\omega))
\Bigr)
\ge 0,
\qquad
\forall \omega\in\Omega.
\]
Therefore,
{\small
\[
A_{r,a}(\pi^{\NE};\mu)
=
\sum_{\omega\in\Omega}
\mu(\omega)\,x_r^{\NE}(\omega)
(
\ell_a^\omega(x_a^{\NE}(\omega))
-
\ell_r^\omega(x_r^{\NE}(\omega))
).
\]
}
is always non-negative.
Hence \(\pi^{\NE}\in\overline{\OC}_\varepsilon(\mu_0)\), which proves
the claim.
\end{proof}

Thus, if all statewise Nash flows are available as recommendations, the
robust obedience region is nonempty for every \(\varepsilon \ge 0\). 
In general, we are interested in the
constrained case, where \(X\) may not contain all such Nash equilibrium flows.
In the following example, we show that if $X$ does not contain all of the state-wise Nash equilibria, it is possible that robust feasibility is lost.
\begin{example}
Robust feasibility may fail immediately in the constrained case.
Consider \(E=\{1,2\}\), \(\Omega=\{\omega_1,\omega_2\}\), and
\(X=\{x^0\}\) with \(x^0=(1,0)\). Then the only signaling policy is
\(\pi^0\), where \(\pi^0_\omega(x^0)=1\) for all \(\omega\). Let
\[
\ell_1^{\omega_1}(z)=0,\quad \ell_2^{\omega_1}(z)=2,\qquad
\ell_1^{\omega_2}(z)=2,\quad \ell_2^{\omega_2}(z)=1.
\]
Then \(x^{\NE}(\omega_1)=(1,0)\) and \(x^{\NE}(\omega_2)=(0,1)\), so
\(X\) does not contain all statewise Nash equilibrium flows. Take
\(\mu_0(\omega_1)=\frac13\) and \(\mu_0(\omega_2)=\frac23\). For the
only relevant deviation \((1,2)\),
\begin{equation*}
A_{1,2}(\pi^0;\mu)
=
\mu(\omega_1)(2-0)+\mu(\omega_2)(1-2)
=
3\mu(\omega_1)-1.
\end{equation*}
Hence obedience holds if and only if \(\mu(\omega_1)\ge \frac13\), and
at the nominal prior \(A_{1,2}(\pi^0;\mu_0)=0\). For any
\(\varepsilon>0\), one can choose \(\mu\in U_\varepsilon(\mu_0)\) with
\(\mu(\omega_1)<1/3\), which gives \(A_{1,2}(\pi^0;\mu)<0\). Since
\(\pi^0\) is the only feasible policy,
\[
\overline{\OC}_\varepsilon(\mu_0)=\varnothing,\qquad \forall \varepsilon>0.
\]
\end{example}
Thus, without the statewise Nash flows, a positive robustness margin is no longer automatic, though constrained recommendation sets may still admit a strictly positive certified robustness radius.

\subsection{Recommendation Support Patterns}
\label{subsec:support_restricted_certificate}

We now study the constrained regime. Since the exact
policy-level robustness radius $\rho(\pi)$ may be hard to compute
directly, we introduce a simpler certified radius
$\widehat\rho(\pi)\le \rho(\pi)$. This gives a sufficient guarantee for a
fixed signaling policy and will later be used to build a global
feasibility certificate.

For each ordered pair \((r,a)\in\mathcal J\), define
\begin{equation}
\label{eq:realized_cost_difference}
\Delta_{r,a}(\omega,x)
:=
\ell_a^\omega(x_a)-\ell_r^\omega(x_r).
\end{equation}
For a policy \(\pi\in\Pi\), define
\begin{equation}
\label{eq:recommendation_masses}
m_r^\omega(\pi):=
\sum_{x\in X}\pi_\omega(x)x_r,
\quad
m_r(\pi;\mu_0):=
\sum_{\omega\in\Omega}\mu_0(\omega)m_r^\omega(\pi),
\end{equation}
and
\(\operatorname{supp}_r(\pi):=
\{(\omega,x)\in\Omega\times X:\ \pi_\omega(x)x_r>0\}\).
For every \((r,a)\in\mathcal J\) with
\(\operatorname{supp}_r(\pi)\neq\varnothing\), define
\begin{equation}
\label{eq:sigma_bounds_pi}
\begin{aligned}
\underline{\sigma}_{r,a}^{\pi}
&:= \inf_{(\omega,x)\in\operatorname{supp}_r(\pi)}
\Delta_{r,a}(\omega,x),\\
\overline{\sigma}_{r,a}^{\pi}
&:= \sup_{(\omega,x)\in\operatorname{supp}_r(\pi)}
\Delta_{r,a}(\omega,x).
\end{aligned}
\end{equation}
and \(M_{r,a}^{\pi}:=
\max\{|\underline{\sigma}_{r,a}^{\pi}|,\,
|\overline{\sigma}_{r,a}^{\pi}|\}\).

If prior uncertainty is measured in \(\ell_p\), let \(q\) be the dual
exponent, \(1/p+1/q=1\). For any \(\pi\in\Pi\), define
\begin{equation}
\label{eq:policy_certificate_radius}
\widehat{\rho}(\pi)
:=
\min_{\substack{(r,a)\in\mathcal J:\\ m_r(\pi;\mu_0)>0}}
\frac{\underline{\sigma}_{r,a}^{\pi}\,m_r(\pi;\mu_0)}
{|\Omega|^{1/q}M_{r,a}^{\pi}},
\end{equation}
omitting pairs with \(m_r(\pi;\mu_0)=0\).

The next result shows that \(\widehat{\rho}(\pi)\) certifies a nontrivial
range of uncertainty radii for which a fixed signaling policy remains
robustly obedient.

\begin{theorem}
\label{thm:certified_epsilon_range_policy}
Fix \(\pi\in\Pi\). Suppose that \(\underline{\sigma}_{r,a}^{\pi}>0\) for all
\((r,a)\in\mathcal J\) with \(m_r(\pi;\mu_0)>0\). Then
\(\pi\in\overline{\OC}_{\varepsilon}(\mu_0)\) for all
\(\varepsilon\in[0,\widehat{\rho}(\pi)]\). Consequently,
\[
\widehat{\rho}(\pi)\le \rho(\pi).
\]
\end{theorem}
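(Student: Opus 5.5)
The plan is to exploit that, for a fixed policy $\pi$ with finite support $X$, each obedience slack $A_{r,a}(\pi;\mu)$ is a \emph{linear} function of the belief $\mu$. I would write
\[
A_{r,a}(\pi;\mu)=\sum_{\omega\in\Omega}\mu(\omega)\,g_{r,a}(\omega),
\qquad
g_{r,a}(\omega):=\sum_{x\in X}\pi_\omega(x)\,x_r\,\Delta_{r,a}(\omega,x).
\]
I would then bound the nominal slack from below, and bound the change in the slack from $\mu_0$ to any $\mu\in U_\varepsilon(\mu_0)$ from above by H\"older's inequality with the $\ell_p/\ell_q$ pairing. Throughout I fix an ordered pair $(r,a)\in\mathcal J$ and split into the two cases $m_r(\pi;\mu_0)=0$ and $m_r(\pi;\mu_0)>0$.

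\emph{Case $m_r(\pi;\mu_0)=0$.} Since $\mu_0\in\operatorname{ri}\Delta(\Omega)$ has full support, $m_r^\omega(\pi)=0$ for every $\omega$. Because $x_r\ge 0$, this gives $\pi_\omega(x)x_r=0$ for all $(\omega,x)$. Hence $g_{r,a}\equiv 0$, and therefore $A_{r,a}(\pi;\mu)=0\ge 0$ for every $\mu$. These pairs impose no constraint, which justifies omitting them from the minimum in \eqref{eq:policy_certificate_radius}.

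\emph{Case $m_r(\pi;\mu_0)>0$.} The argument has four steps.
\begin{enumerate}
\item Every term with positive weight $\pi_\omega(x)x_r>0$ lies in $\operatorname{supp}_r(\pi)$, so there $\Delta_{r,a}(\omega,x)\ge\underline\sigma^\pi_{r,a}$. This gives
\[
A_{r,a}(\pi;\mu_0)\ge \underline\sigma^\pi_{r,a}\,m_r(\pi;\mu_0).
\]
\item The same support argument gives $|g_{r,a}(\omega)|\le M^\pi_{r,a}\,m_r^\omega(\pi)$. Since each $x\in X$ satisfies $\sum_e x_e=1$ and $x\ge 0$, we have $x_r\le 1$ and hence $m_r^\omega(\pi)\le 1$. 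Therefore $|g_{r,a}(\omega)|\le M^\pi_{r,a}$, and consequently
\[
\|g_{r,a}\|_q\le |\Omega|^{1/q}M^\pi_{r,a}.
\]
\item H\"older's inequality then yields, for every $\mu\in U_\varepsilon(\mu_0)$,
\[
A_{r,a}(\pi;\mu)\ge A_{r,a}(\pi;\mu_0)-\|\mu-\mu_0\|_p\,\|g_{r,a}\|_q
\ge \underline\sigma^\pi_{r,a}m_r(\pi;\mu_0)-\varepsilon\,|\Omega|^{1/q}M^\pi_{r,a}.
\]
\item The hypothesis $\underline\sigma^\pi_{r,a}>0$ forces $M^\pi_{r,a}\ge\underline\sigma^\pi_{r,a}>0$, so the ratio in \eqref{eq:policy_certificate_radius} is well defined and positive. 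For $\varepsilon\le\widehat\rho(\pi)$ this ratio is at least $\varepsilon$, so the lower bound in step 3 is nonnegative.
\end{enumerate}

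Combining the two cases over all pairs in $\mathcal J$ shows $\pi\in\OC(\mu;X)$ for every $\mu\in U_\varepsilon(\mu_0)$. Hence $\pi\in\overline{\OC}_\varepsilon(\mu_0)$ for all $\varepsilon\in[0,\widehat\rho(\pi)]$. In particular $\widehat\rho(\pi)$ belongs to the set whose supremum defines $\rho(\pi)$, which gives $\widehat\rho(\pi)\le\rho(\pi)$.

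The only delicate step is step 2, the uniform bound on $\|g_{r,a}\|_q$. It requires the crude estimate $m_r^\omega(\pi)\le 1$, which is exactly where the factor $|\Omega|^{1/q}$ comes from, instead of something involving $m_r^\omega$ itself. It also requires care that the support restriction is applied statewise, with all $\omega$ and not only those weighted by $\mu_0$. Everything else is a direct linear computation.
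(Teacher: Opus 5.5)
Your proposal is correct and takes essentially the same approach as the paper. Both arguments use the support-restricted bounds (the paper's Lemma) to get $A_{r,a}(\pi;\mu_0)\ge \underline{\sigma}^{\pi}_{r,a}\,m_r(\pi;\mu_0)$ and $\|D_{r,a}(\pi)\|_q\le |\Omega|^{1/q}M^{\pi}_{r,a}$, then apply H\"older's inequality to conclude nonnegativity for $\varepsilon\le\widehat{\rho}(\pi)$. Your full-support argument for the pairs with $m_r(\pi;\mu_0)=0$ and your explicit bound $m_r^\omega(\pi)\le 1$ just spell out steps the paper leaves implicit.
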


Thus \(\widehat{\rho}(\pi)\) gives an explicit robustness guarantee for
\(\pi\). In particular, the guarantee deteriorates as \(M_{r,a}^{\pi}\)
grows, since larger values of \(M_{r,a}^{\pi}\) correspond to greater
spread in the realized deviation costs under \(\pi\). The proof uses the
following lemma.
\begin{lemma}
\label{lem:support_restricted_bounds}
Fix a policy \(\pi\) and a pair \((r,a)\in\mathcal J\). If
\(\operatorname{supp}_r(\pi)\neq\varnothing\), then for each
\(\omega\in\Omega\),
\[
\underline{\sigma}_{r,a}^{\pi}\,m_r^\omega(\pi)
\le
d_{r,a}^\omega(\pi)
\le
\overline{\sigma}_{r,a}^{\pi}\,m_r^\omega(\pi),
\]
where \(d_{r,a}^\omega(\pi):=
\sum_{x\in X}\pi_\omega(x)x_r\Delta_{r,a}(\omega,x)\).
Therefore, if
\begin{equation}
\label{eq:D_and_A_inner_product}
D_{r,a}(\pi)
=
\bigl(d_{r,a}^\omega(\pi)\bigr)_{\omega\in\Omega},
\qquad
A_{r,a}(\pi;\mu)
=
\langle \mu, D_{r,a}(\pi)\rangle,
\end{equation}
then
\[
\underline{\sigma}_{r,a}^{\pi}\,m_r(\pi;\mu_0)
\le
A_{r,a}(\pi;\mu_0)
\le
\overline{\sigma}_{r,a}^{\pi}\,m_r(\pi;\mu_0),
\]
and
\[
\|D_{r,a}(\pi)\|_q
\le
M_{r,a}^{\pi}\,\|m_r^\cdot(\pi)\|_q
\le
|\Omega|^{1/q}M_{r,a}^{\pi},
\]
where \(m_r^\cdot(\pi):=
\bigl(m_r^\omega(\pi)\bigr)_{\omega\in\Omega}\).
\end{lemma}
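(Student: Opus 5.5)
The plan is to prove the lemma in three layers: a statewise sandwich, averaging it against the prior, and then bounding the norm of the vector of statewise slacks.

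\emph{Statewise sandwich.} Fix \(\omega\in\Omega\). Every summand of \(d_{r,a}^\omega(\pi)=\sum_{x\in X}\pi_\omega(x)x_r\Delta_{r,a}(\omega,x)\) is either zero, when \(\pi_\omega(x)x_r=0\), or has \((\omega,x)\in\operatorname{supp}_r(\pi)\). In the latter case the definition \eqref{eq:sigma_bounds_pi} gives \(\underline{\sigma}_{r,a}^{\pi}\le\Delta_{r,a}(\omega,x)\le\overline{\sigma}_{r,a}^{\pi}\). The weights \(\pi_\omega(x)x_r\) are nonnegative, because \(x\in\mathbb{R}_+^E\). Multiplying the two-sided bound by these weights and summing over \(x\) therefore gives \(\underline{\sigma}_{r,a}^{\pi}m_r^\omega(\pi)\le d_{r,a}^\omega(\pi)\le\overline{\sigma}_{r,a}^{\pi}m_r^\omega(\pi)\). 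Zero-weight terms contribute \(0\) to both sides, so they cause no issue. Since \(X\) is finite, the infimum and supremum are attained and are finite.

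\emph{Inner-product form and averaging.} Substituting the finite-support form of \(\pi_\omega\) into \eqref{eq:obedience_slack} gives \(A_{r,a}(\pi;\mu)=\sum_\omega\mu(\omega)d_{r,a}^\omega(\pi)=\langle\mu,D_{r,a}(\pi)\rangle\), which is \eqref{eq:D_and_A_inner_product}. Taking \(\mu=\mu_0\), whose entries are all nonnegative, and summing the statewise sandwich with weights \(\mu_0(\omega)\) yields the bounds on \(A_{r,a}(\pi;\mu_0)\) in terms of \(m_r(\pi;\mu_0)\), by the definition \eqref{eq:recommendation_masses}.

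\emph{Norm bound.} From the statewise sandwich, \(|d_{r,a}^\omega(\pi)|\le\max\{|\underline{\sigma}_{r,a}^{\pi}|,|\overline{\sigma}_{r,a}^{\pi}|\}\,m_r^\omega(\pi)=M_{r,a}^{\pi}m_r^\omega(\pi)\). This holds because a quantity lying between \(\underline\sigma m\) and \(\overline\sigma m\) with \(m\ge0\) has absolute value at most \(\max(|\underline\sigma|,|\overline\sigma|)\,m\). Monotonicity of the \(\ell_q\) norm under entrywise domination of absolute values then gives \(\|D_{r,a}(\pi)\|_q\le M_{r,a}^{\pi}\|m_r^\cdot(\pi)\|_q\). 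For the last inequality, each \(m_r^\omega(\pi)=\sum_x\pi_\omega(x)x_r\) lies in \([0,1]\), because \(x_r\le\sum_e x_e=1\) and \(\pi_\omega\) is a probability vector. Hence \(\|m_r^\cdot(\pi)\|_q\le\|\mathbf{1}\|_q=|\Omega|^{1/q}\), and this also covers \(q=\infty\), where \(|\Omega|^{0}=1\).

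\emph{Main obstacle.} The only delicate point is the sign handling in the absolute-value step, since \(\underline\sigma\) may be negative. The argument above relies on the nonnegativity of \(m_r^\omega(\pi)\), so I expect no real difficulty. Everything else is bookkeeping with nonnegative weights.
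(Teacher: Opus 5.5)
Your proposal is correct and follows the paper's proof essentially step for step. It uses the statewise sandwich via the nonnegative weights \(\pi_\omega(x)x_r\), averages against \(\mu_0\), and applies the entrywise bound \(|d_{r,a}^\omega(\pi)|\le M_{r,a}^{\pi}m_r^\omega(\pi)\) followed by monotonicity of the \(\ell_q\) norm; your explicit argument that \(m_r^\omega(\pi)\in[0,1]\) (from \(x_r\le\sum_e x_e=1\)) fills in the one step the paper leaves implicit in the final inequality.
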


\begin{proof}
By \eqref{eq:sigma_bounds_pi}, multiplying the pointwise bounds on
\(\Delta_{r,a}(\omega,x)\) by \(\pi_\omega(x)x_r\ge 0\) and summing over
\(x\in X\) gives
\[
\underline{\sigma}_{r,a}^{\pi}\,m_r^\omega(\pi)
\le
d_{r,a}^\omega(\pi)
\le
\overline{\sigma}_{r,a}^{\pi}\,m_r^\omega(\pi),
\qquad
\forall \omega\in\Omega.
\]
Using \eqref{eq:recommendation_masses} and
\eqref{eq:D_and_A_inner_product}, summing against \(\mu_0(\omega)\)
gives the bounds on \(A_{r,a}(\pi;\mu_0)\). Also,
\(|d_{r,a}^\omega(\pi)|\le M_{r,a}^{\pi}m_r^\omega(\pi)\), so
\[
\|D_{r,a}(\pi)\|_q
\le
M_{r,a}^{\pi}\|m_r^\cdot(\pi)\|_q
\le
|\Omega|^{1/q}M_{r,a}^{\pi}.
\]
\end{proof}

\begin{proof}[Proof of Theorem~\ref{thm:certified_epsilon_range_policy}]
Fix \((r,a)\in\mathcal J\) with \(m_r(\pi;\mu_0)>0\) and let
\(\mu\in U_\varepsilon(\mu_0)\). By \eqref{eq:D_and_A_inner_product},
Hölder's inequality, and Lemma~\ref{lem:support_restricted_bounds},
\[
A_{r,a}(\pi;\mu)
\ge
\underline{\sigma}_{r,a}^{\pi}\,m_r(\pi;\mu_0)
-\varepsilon |\Omega|^{1/q}M_{r,a}^{\pi}.
\]
If \(\varepsilon\le \widehat{\rho}(\pi)\), then
\eqref{eq:policy_certificate_radius} gives
\(A_{r,a}(\pi;\mu)\ge 0\). Pairs with \(m_r(\pi;\mu_0)=0\) are inactive,
so \(\pi\in\overline{\OC}_{\varepsilon}(\mu_0)\) for all
\(\varepsilon\in[0,\widehat{\rho}(\pi)]\).
\end{proof}

Thus \(\widehat{\rho}(\pi)\) gives a certified robustness range for a
fixed signaling policy.
We now shift our focus to identifying robustness certificates for the feasibility of the robust information design problem, i.e., bounds on \eqref{eq:rho_star_X}.
Let
\begin{equation}
\label{eq:rho_hat_star_X}
\widehat{\rho}^{\,\star}
:=
\max_{\pi\in\Pi}\widehat{\rho}(\pi).
\end{equation}
By Theorem~\ref{thm:certified_epsilon_range_policy},
\[
\widehat{\rho}^{\,\star}
=
\max_{\pi\in\Pi}\widehat{\rho}(\pi)
\le
\max_{\pi\in\Pi}\rho(\pi)
=
\rho^\star,
\]
and thus \(\widehat{\rho}^{\,\star}\) is a certified lower bound on
\(\rho^\star\).

Directly maximizing $\widehat{\rho}(\pi)$ over $\Pi$ is computationally difficult because
$\underline{\sigma}_{r,a}^{\pi}$, $\overline{\sigma}_{r,a}^{\pi}$, and
$M_{r,a}^{\pi}$ depend on the realized support of $\pi$, making $\widehat{\rho}(\pi)$ possibly discontinuous over $\Pi$.
We therefore index the problem by realized support patterns.
Once a support pattern is fixed, the remaining optimization over \(\pi\)
is linear.

For each route \(r\), let \(X_r:=\{x\in X:\ x_r>0\}\). A support pattern
is a collection \(S=(S_r)_{r\in E}\) with
\(S_r\subseteq \Omega\times X_r\). Let \(\mathfrak S\) denote the finite
family of support patterns induced by policies in \(\Pi\). For
\(\pi\in\Pi\), let \(S(\pi):=\bigl(\operatorname{supp}_r(\pi)\bigr)_{r\in E}\).

For any \(S\in\mathfrak S\) and any \((r,a)\in\mathcal J\) with
\(S_r\neq\varnothing\), define
\[
\begin{aligned}
\underline{\sigma}_{r,a}(S)
&:= \min_{(\omega,x)\in S_r}\Delta_{r,a}(\omega,x),\\
\overline{\sigma}_{r,a}(S)
&:= \max_{(\omega,x)\in S_r}\Delta_{r,a}(\omega,x),\\
M_{r,a}(S)
&:= \max\{|\underline{\sigma}_{r,a}(S)|,\,
|\overline{\sigma}_{r,a}(S)|\}.
\end{aligned}
\]
Also define
\[
\Gamma(S):=
\{(\omega,x)\in\Omega\times X:\ \exists r\in E \text{ such that } (\omega,x)\in S_r\}.
\]

For each \(S\in\mathfrak S\), let \(\widehat{\rho}^{\,\star}(S)\) be the
optimal value of
{\setlength{\jot}{0.2pt}
\begin{maxi*}|s| {\pi,\rho}{\rho}{}{}
\addConstraint{\underline{\sigma}_{r,a}(S)\,m_r(\pi;\mu_0)}
{\ge \rho\,|\Omega|^{1/q}M_{r,a}(S)}{}
\addConstraint{}{}{\quad \forall (r,a)\in\mathcal J:\ S_r\neq\varnothing}
\addConstraint{\sum_{x\in X}\pi_\omega(x)}{=1,}
{\quad \forall \omega\in\Omega}
\addConstraint{\pi_\omega(x)}{\ge 0,}
{\quad \forall \omega\in\Omega,\ \forall x\in X}
\addConstraint{\pi_\omega(x)}{=0,}
{\quad \forall (\omega,x)\in(\Omega\times X)\setminus \Gamma(S).}
\end{maxi*}
}
where the constraints enforce consistency of \(\pi\) with the prescribed
support pattern \(S\).
\vspace{-0.1em}
\begin{theorem}
\label{thm:pattern_decomposition}
\(\widehat{\rho}^{\,\star}
=
\max_{S\in\mathfrak S}\widehat{\rho}^{\,\star}(S)\).
Hence maximizing the certified radius reduces to finitely many linear
programs, one for each support pattern.
\end{theorem}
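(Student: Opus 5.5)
We prove the equality by two inequalities, always comparing a policy $\pi$ with its own pattern $S(\pi)$. The basic observation is that when $S=S(\pi)$, the support-dependent quantities coincide:
\[
\underline{\sigma}_{r,a}(S(\pi))=\underline{\sigma}_{r,a}^{\pi}, \qquad \overline{\sigma}_{r,a}(S(\pi))=\overline{\sigma}_{r,a}^{\pi}, \qquad M_{r,a}(S(\pi))=M_{r,a}^{\pi}.
\]
This holds because $\operatorname{supp}_r(\pi)$ is finite once $X$ is finite, so the infima and suprema in \eqref{eq:sigma_bounds_pi} are minima and maxima over the same set. Moreover, $\pi$ vanishes off $\Gamma(S(\pi))$, and $m_r(\pi;\mu_0)>0$ if and only if $\operatorname{supp}_r(\pi)\neq\varnothing$, since $\mu_0$ has full support. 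Throughout, $\widehat\rho(\pi)$ is read as defined on the class where the hypothesis of Theorem~\ref{thm:certified_epsilon_range_policy} holds (otherwise it is a nonpositive certificate and is treated as $0$). The same convention is applied to the LP value $\widehat\rho^{\,\star}(S)$.

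\textbf{($\le$).} Take $\pi$ attaining or approaching $\widehat\rho^{\,\star}$ and set $S=S(\pi)\in\mathfrak S$. We show that $(\pi,\widehat\rho(\pi))$ is feasible for the LP indexed by $S$.
\begin{itemize}
\item The simplex constraints hold trivially, and the support constraint holds by the observation above.
\item For every $(r,a)$ with $S_r\neq\varnothing$ we have $m_r(\pi;\mu_0)>0$. By definition \eqref{eq:policy_certificate_radius} of $\widehat\rho(\pi)$ as a minimum, $\widehat\rho(\pi)\,|\Omega|^{1/q}M^{\pi}_{r,a}\le \underline\sigma^{\pi}_{r,a}m_r(\pi;\mu_0)$. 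This is exactly the LP constraint once the $S$-quantities are identified with the $\pi$-quantities.
\end{itemize}
Hence $\widehat\rho(\pi)\le \widehat\rho^{\,\star}(S)\le\max_{S}\widehat\rho^{\,\star}(S)$.

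\textbf{($\ge$).} Fix $S\in\mathfrak S$ and an optimal (or near-optimal) LP solution $(\pi,\rho)$. This direction is the main obstacle: the LP only imposes $\pi_\omega(x)=0$ off $\Gamma(S)$, so the realized pattern satisfies $S(\pi)\subseteq S$ componentwise but may be strictly smaller. We need the certificate of $\pi$ to be at least the LP value computed with $S$'s coarser constants.
\begin{itemize}
\item \emph{Monotonicity in the support.} If $\operatorname{supp}_r(\pi)\subseteq S_r$, then $\underline\sigma^\pi_{r,a}\ge\underline\sigma_{r,a}(S)$ and $\overline\sigma^\pi_{r,a}\le\overline\sigma_{r,a}(S)$.
\item \emph{Case $\underline\sigma_{r,a}(S)>0$.} Then $M^\pi_{r,a}=\overline\sigma^\pi_{r,a}\le M_{r,a}(S)$. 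For every active pair ($m_r>0$, hence $S_r\neq\varnothing$), the LP constraint therefore gives
\[
\underline\sigma^\pi_{r,a}\,m_r \;\ge\; \underline\sigma_{r,a}(S)\,m_r \;\ge\; \rho\,|\Omega|^{1/q}M_{r,a}(S) \;\ge\; \rho\,|\Omega|^{1/q}M^\pi_{r,a}.
\]
Thus $\widehat\rho(\pi)\ge\rho$.
\item \emph{Case $\underline\sigma_{r,a}(S)\le0$ for some active pair.} The LP value for $S$ is then at most $0$ (using $M_{r,a}(S)>0$ when $m_r>0$; if $M_{r,a}(S)=0$ then also $\underline\sigma_{r,a}(S)=0$ and the constraint is vacuous). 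This case contributes nothing beyond the trivial value under the stated convention.
\item \emph{Inactive pairs.} Pairs with $S_r\neq\varnothing$ but $m_r(\pi;\mu_0)=0$ only add constraints to the LP. They are omitted in $\widehat\rho(\pi)$, which can only help this inequality.
\end{itemize}
Hence $\widehat\rho^{\,\star}\ge\widehat\rho(\pi)\ge\widehat\rho^{\,\star}(S)$ for every $S$.

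\textbf{Finiteness and linearity.} Since $X$ and $\Omega$ are finite, $\mathfrak S$ is finite. For fixed $S$, the constants $\underline\sigma_{r,a}(S)$ and $M_{r,a}(S)$ are numbers, and $m_r(\pi;\mu_0)$ is linear in $\pi$. Each subproblem is therefore a linear program in $(\pi,\rho)$ over a compact feasible set in $\pi$, so its maximum is attained. This justifies writing $\max$ and gives the stated reduction to finitely many LPs.
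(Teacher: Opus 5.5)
Your proof is correct and is essentially the paper's argument. The $(\le)$ direction shows that $(\pi,\widehat\rho(\pi))$ is feasible for the program indexed by $S(\pi)$. The $(\ge)$ direction takes an LP optimizer, uses $S(\pi^S)_r\subseteq S_r$, and invokes monotonicity of $\underline\sigma_{r,a}$ and $M_{r,a}$ under shrinking support.

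Your explicit sign convention is extra care rather than a deviation from that argument. The paper's step from those monotonicity facts to $\widehat\rho^{\,\star}(S)\le\widehat\rho(\pi^S)$ tacitly requires $\widehat\rho^{\,\star}(S)\ge 0$. When the ratio is negative, shrinking $M_{r,a}$ can make the certificate worse, and that regime is irrelevant for certification anyway.
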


\begin{proof}
Because \(\Omega\) and \(X\) are finite, so is \(\mathfrak S\).
Fix \(\pi\in\Pi\), and let \(S(\pi)\in\mathfrak S\) be its realized
support pattern. Then, for every \((r,a)\in\mathcal J\) with
\(m_r(\pi;\mu_0)>0\),
\[
\underline{\sigma}_{r,a}^{\pi}
\hspace{-1pt}=
\underline{\sigma}_{r,a}(S(\pi)),
\
\overline{\sigma}_{r,a}^{\pi}
\hspace{-1pt}=
\overline{\sigma}_{r,a}(S(\pi)),
\
M_{r,a}^{\pi}
\hspace{-1pt}=
M_{r,a}(S(\pi)).
\]
Moreover, if \(\pi_\omega(x)>0\), then \((\omega,x)\in\Gamma(S(\pi))\),
so \((\pi,\widehat{\rho}(\pi))\) is feasible for the program defining
\(\widehat{\rho}^{\,\star}(S(\pi))\). Hence
\[
\widehat{\rho}(\pi)
\le
\widehat{\rho}^{\,\star}(S(\pi))
\le
\max_{S\in\mathfrak S}\widehat{\rho}^{\,\star}(S).
\]
Taking the maximum over \(\pi\in\Pi\) yields
\[
\widehat{\rho}^{\,\star}
\le
\max_{S\in\mathfrak S}\widehat{\rho}^{\,\star}(S).
\]

Conversely, fix \(S\in\mathfrak S\), and let \(\pi^S\) be an optimizer
of the corresponding linear program. Let \(S':=S(\pi^S)\). Then
\(S'_r\subseteq S_r\) for every route \(r\). Therefore, for every
\((r,a)\in\mathcal J\) with \(m_r(\pi^S;\mu_0)>0\),
\[
\underline{\sigma}_{r,a}^{\pi^S}
=
\underline{\sigma}_{r,a}(S')
\ge
\underline{\sigma}_{r,a}(S),
\
M_{r,a}^{\pi^S}
=
M_{r,a}(S')
\le
M_{r,a}(S).
\]
Thus
\(
\widehat{\rho}^{\,\star}(S)
\le
\widehat{\rho}(\pi^S)
\le
\widehat{\rho}^{\,\star}.
\)
Taking the maximum over \(S\in\mathfrak S\) gives the 
claimed equality.
\end{proof}

Thus \(\widehat{\rho}^{\,\star}\) is a computable certified lower bound
on \(\rho^\star\). In particular, if
\(0\le \varepsilon<\widehat{\rho}^{\,\star}\), then there exists
\(\pi\in\Pi\) such that \(\pi\in\overline{\OC}_{\varepsilon}(\mu_0)\),
and hence \(\overline{\OC}_{\varepsilon}(\mu_0)\neq\varnothing\).
While \(\mathfrak S\) is finite, its cardinality can grow exponentially
with \(|\Omega||X|\) in the worst case; in the numerical examples
considered here, however, the resulting enumeration is tractable.

\section{Sensitivity}
\label{sec:sensitivity}

In the previous section, we studied the feasibility of robust obedience.
We now examine its effect on the planner's objective. For every
\(\varepsilon\) such that
\(\overline{\OC}_{\varepsilon}(\mu_0)\neq\emptyset\), define
\[
V^*(\varepsilon)
:=
\min_{\pi\in\overline{\OC}_{\varepsilon}(\mu_0)} C(\pi).
\]
We first record its monotonicity.

\subsection{Monotonicity}
\label{subsec:sensitivity_monotonicity}

\begin{proposition}
\label{prop:V_monotone}
The value function \(V^*(\varepsilon)\) is monotone nondecreasing on
its feasibility domain. That is, if
\(0\le \varepsilon_1\le \varepsilon_2\) and
\(\overline{\OC}_{\varepsilon_2}(\mu_0)\neq\emptyset\), then
\[
V^*(\varepsilon_1)\le V^*(\varepsilon_2).
\]
\end{proposition}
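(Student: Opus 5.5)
The plan is to argue by nestedness of the feasible sets: a minimization over a smaller set cannot yield a smaller value. First we show that the uncertainty sets are nested. If \(0\le\varepsilon_1\le\varepsilon_2\) and \(\mu\in U_{\varepsilon_1}(\mu_0)\), then \(\|\mu-\mu_0\|\le\varepsilon_1\le\varepsilon_2\). Hence
\(U_{\varepsilon_1}(\mu_0)\subseteq U_{\varepsilon_2}(\mu_0)\).
Intersecting over a larger index set gives a smaller set, so by the definition of the robust obedience region
\[
\overline{\OC}_{\varepsilon_2}(\mu_0)=\bigcap_{\mu\in U_{\varepsilon_2}(\mu_0)}\OC(\mu)\subseteq\bigcap_{\mu\in U_{\varepsilon_1}(\mu_0)}\OC(\mu)=\overline{\OC}_{\varepsilon_1}(\mu_0).
\]
In particular, since \(\overline{\OC}_{\varepsilon_2}(\mu_0)\neq\emptyset\) by hypothesis, \(\overline{\OC}_{\varepsilon_1}(\mu_0)\) is also nonempty. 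Thus \(\varepsilon_1\) lies in the feasibility domain and \(V^*(\varepsilon_1)\) is defined.

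Next we compare the values. The objective \(C(\pi)=C(\pi;\mu_0,X)\) is the same function of \(\pi\) for both problems, because it is evaluated at the nominal prior \(\mu_0\) and does not depend on \(\varepsilon\). For any \(\pi\in\overline{\OC}_{\varepsilon_2}(\mu_0)\), we have \(\pi\in\overline{\OC}_{\varepsilon_1}(\mu_0)\), and therefore \(V^*(\varepsilon_1)\le C(\pi)\). Taking the infimum over \(\pi\in\overline{\OC}_{\varepsilon_2}(\mu_0)\) yields \(V^*(\varepsilon_1)\le V^*(\varepsilon_2)\).

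The only point needing care is whether the minima are attained, since \(V^*\) is written with \(\min\). We can sidestep this by working with infima throughout, as above. Alternatively, attainment can be justified directly. By Proposition~\ref{prop:finite_support_reduction} we may take \(X\) finite, so \(\Pi\) is a product of simplices and hence compact. Each \(\OC(\mu)\) is defined by finitely many linear inequalities in \(\pi\), so it is closed. An arbitrary intersection of closed sets is closed, so \(\overline{\OC}_\varepsilon(\mu_0)\) is compact. Finally, \(C\) is linear in \(\pi\), so its minimum over this compact set exists. Either way the argument goes through, and I expect no genuine obstacle.
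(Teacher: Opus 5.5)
Your proof is correct and follows the paper's argument: $U_{\varepsilon_1}(\mu_0)\subseteq U_{\varepsilon_2}(\mu_0)$ gives $\overline{\OC}_{\varepsilon_2}(\mu_0)\subseteq\overline{\OC}_{\varepsilon_1}(\mu_0)$, and minimizing the same nominal-prior cost over the smaller set cannot give a lower value. Your extra remarks fill in points the paper leaves implicit, and both are valid: nonemptiness at $\varepsilon_1$ follows from the nesting, and the minimum is attained (or one can work with infima).
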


\begin{proof}
Since \(U_{\varepsilon_1}(\mu_0)\subseteq
U_{\varepsilon_2}(\mu_0)\), every policy that is obedient for all
\(\mu\in U_{\varepsilon_2}(\mu_0)\) is also obedient for all
\(\mu\in U_{\varepsilon_1}(\mu_0)\). Hence
\[
\overline{\OC}_{\varepsilon_2}(\mu_0)
\subseteq
\overline{\OC}_{\varepsilon_1}(\mu_0).
\]
Minimizing the same objective over nested feasible sets yields
\[
V^*(\varepsilon_1)\le V^*(\varepsilon_2).
\]
\end{proof}

Thus, stronger robustness requirements weakly increase the optimal
cost.

\subsection{Local slope via active constraints}
\label{subsec:sensitivity_slope}

The monotonicity result describes only the direction of change. We next
study the local slope. Specialize to Euclidean belief uncertainty and let
\[
P_\Omega:=I-\frac{1}{|\Omega|}\mathbf 1\mathbf 1^\top
\]
be the projection onto \(\{e:\mathbf 1^\top e=0\}\).

For radii satisfying
\[
\varepsilon\le
\min_{\omega\in\Omega}\mu_0(\omega)
\sqrt{\frac{|\Omega|}{|\Omega|-1}},
\]
the exact robust constraints take the form
\[
h_{r,a}(\pi,\varepsilon)
:=
A_{r,a}(\pi;\mu_0)
-\varepsilon\|P_\Omega D_{r,a}(\pi)\|_2
\ge0.
\]
Thus,
\[
\overline{\OC}_\varepsilon(\mu_0)
=
\{\pi:h_{r,a}(\pi,\varepsilon)\ge0,\ 
\forall(r,a)\in\mathcal J\}.
\]
Since \(A_{r,a}\) and \(D_{r,a}\) are linear in the policy probabilities,
these constraints are second-order-cone representable.

\begin{assumption}[Relative Slater]
\label{ass:relative_slater}
At the radius \(\varepsilon\) under consideration, there exists
\(\hat\pi\), with \(\hat\pi_\omega\) in the relative interior of
\(\Delta(X)\) for every \(\omega\in\Omega\), such that
\[
h_{r,a}(\hat\pi,\varepsilon)>0
\qquad
\forall(r,a)\in\mathcal J.
\]
\end{assumption}

Let \(\pi^*(\varepsilon)\) be an optimal solution and define the active
obedience set
\[
B(\varepsilon)
:=
\{(r,a)\in\mathcal J:
h_{r,a}(\pi^*(\varepsilon),\varepsilon)=0\}.
\]
Let
\[
T_\Pi
:=
\left\{
v:
\sum_{x\in X}v_\omega(x)=0\ \ \forall\omega,\quad
v_\omega(x)=0\ \text{if }\pi_\omega^*(x)=0
\right\}
\]
be the tangent space of the minimal face of the policy simplex, and let
\(P_\Pi\) denote its orthogonal projection. Define the projected
active-gradient matrix
\begin{equation}
\label{eq:active_constraint_jacobian}
G_B^\Pi(\varepsilon)
:=
\Bigl[
P_\Pi\nabla_\pi
h_{r,a}(\pi^*(\varepsilon),\varepsilon)
\Bigr]_{(r,a)\in B(\varepsilon)}.
\end{equation}
Also define
\[
\Delta_{\max}
:=
\max_{(r,a)\in\mathcal J}
\max_{\omega\in\Omega}
\max_{x\in X}
\left|
x_r\bigl(
\ell_a^\omega(x_a)-\ell_r^\omega(x_r)
\bigr)
\right|.
\]

\begin{theorem}
\label{thm:slope_bound_structure}
Fix a radius \(\varepsilon\) satisfying the bound above and suppose
Assumption~\ref{ass:relative_slater} holds at this radius. Suppose
\(V^*(\varepsilon)\) is differentiable, every active constraint is
differentiable at \(\pi^*(\varepsilon)\), and
\(B(\varepsilon)\neq\varnothing\). Let
\(\lambda^*(\varepsilon)\) be an optimal multiplier vector and define
\[
\lambda_B^*(\varepsilon)
:=
\bigl(
\lambda_{r,a}^*(\varepsilon)
\bigr)_{(r,a)\in B(\varepsilon)}.
\]
If
\[
\sigma_{\min}\bigl(G_B^\Pi(\varepsilon)\bigr)>0,
\]
then
\begin{equation}
\label{eq:value_slope_structural_bound}
\begin{aligned}
0\le \frac{dV^*}{d\varepsilon}(\varepsilon)
&=
\sum_{(r,a)\in B(\varepsilon)}
\lambda_{r,a}^*(\varepsilon)
\|P_\Omega D_{r,a}(\pi^*(\varepsilon))\|_2\\
&\le
\sqrt{|\Omega|\,|B(\varepsilon)|}\,
\Delta_{\max}
\frac{
\|P_\Pi\nabla C(\pi^*(\varepsilon))\|_2
}{
\sigma_{\min}(G_B^\Pi(\varepsilon))
}.
\end{aligned}
\end{equation}
\end{theorem}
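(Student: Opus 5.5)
We argue through Lagrangian duality and the envelope theorem, then bound the multipliers using the KKT stationarity equation restricted to the minimal face.

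\emph{Setup and the equality.} Fix $\varepsilon$ in the admissible range, so that $\overline{\OC}_\varepsilon(\mu_0)=\{\pi:h_{r,a}(\pi,\varepsilon)\ge0\}$. Each $h_{r,a}$ is concave in $\pi$, being linear minus $\varepsilon$ times a norm of a linear map. The objective $C$ is linear in the finite-support policy probabilities. The problem is therefore convex, and Assumption~\ref{ass:relative_slater} yields strong duality together with existence of multipliers $\lambda^*(\varepsilon)\ge0$. Complementary slackness gives $\lambda^*_{r,a}=0$ for $(r,a)\notin B(\varepsilon)$. Write the Lagrangian as
\[
L(\pi,\lambda,\varepsilon)=C(\pi)-\sum_{(r,a)}\lambda_{r,a}h_{r,a}(\pi,\varepsilon),
\]
with the simplex constraints kept explicit. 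Then $V^*(\varepsilon)=\max_\lambda\min_\pi L$. Since $\partial_\varepsilon h_{r,a}=-\|P_\Omega D_{r,a}(\pi)\|_2$, the envelope (Danskin) theorem for convex programs applies wherever $V^*$ is differentiable. It gives
\[
\frac{dV^*}{d\varepsilon}=\partial_\varepsilon L=\sum_{B}\lambda^*_{r,a}\|P_\Omega D_{r,a}(\pi^*)\|_2 .
\]
This quantity is nonnegative, consistent with Proposition~\ref{prop:V_monotone}. To justify the envelope step, I would use the standard perturbation result: the subdifferential of a convex-program value in a parameter equals the set of $\partial_\varepsilon L$ over saddle points. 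At a differentiability point this set is a singleton, so any optimal multiplier gives the slope.

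\emph{Bounding the multipliers.} Stationarity on the minimal face of the policy simplex reads
\[
\nabla C(\pi^*)-\sum_{B}\lambda^*_{r,a}\nabla_\pi h_{r,a}(\pi^*,\varepsilon)\in N,
\]
where $N$ collects the simplex normal directions: the per-$\omega$ constant vectors and the coordinates where $\pi^*_\omega(x)=0$. The differentiability of the active constraints is needed here. Projecting with $P_\Pi$ annihilates $N=T_\Pi^\perp$, which gives
\[
P_\Pi\nabla C(\pi^*)=G_B^\Pi(\varepsilon)\lambda_B^*.
\]
Since $\sigma_{\min}(G_B^\Pi)>0$ means the columns are linearly independent, we get $\|\lambda_B^*\|_2\le\|P_\Pi\nabla C(\pi^*)\|_2/\sigma_{\min}(G_B^\Pi)$.

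\emph{Bounding the weights.} Each entry of $D_{r,a}(\pi)$ is $d^\omega_{r,a}=\sum_x\pi_\omega(x)x_r\Delta_{r,a}(\omega,x)$. This is a convex combination of quantities bounded in absolute value by $\Delta_{\max}$, so $|d^\omega_{r,a}|\le\Delta_{\max}$. Because $P_\Omega$ is an orthogonal projection, $\|P_\Omega D_{r,a}\|_2\le\|D_{r,a}\|_2\le\sqrt{|\Omega|}\,\Delta_{\max}$. Cauchy--Schwarz over the $|B(\varepsilon)|$ active terms then gives
\[
\sum_B\lambda^*_{r,a}\|P_\Omega D_{r,a}\|_2\le\sqrt{|B|}\,\|\lambda_B^*\|_2\sqrt{|\Omega|}\,\Delta_{\max}.
\]
Combining this with the multiplier bound yields \eqref{eq:value_slope_structural_bound}.

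\emph{Main obstacle.} The delicate point is making the envelope and KKT steps rigorous. The robust constraints are nonsmooth where $P_\Omega D_{r,a}(\pi)=0$, and the multipliers need not be unique. The theorem's hypotheses are tailored to this. Differentiability of the active constraints gives a genuine gradient stationarity equation rather than a subgradient inclusion. The condition $\sigma_{\min}>0$ forces uniqueness of $\lambda_B^*$, so the envelope formula is unambiguous. I would first verify the subdifferential characterization of $V^*$ under Slater and then specialize it as above.
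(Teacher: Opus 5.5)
Your proposal follows the paper's proof essentially step for step: relative Slater for strong duality and multipliers, the envelope identity $\frac{dV^*}{d\varepsilon}=\partial_\varepsilon\mathcal L$ with complementary slackness removing inactive pairs, projection of KKT stationarity onto $T_\Pi$ to get $P_\Pi\nabla C(\pi^*(\varepsilon))=G_B^\Pi(\varepsilon)\lambda_B^*(\varepsilon)$, the bound $\|P_\Omega D_{r,a}\|_2\le\sqrt{|\Omega|}\,\Delta_{\max}$, and Cauchy--Schwarz (equivalent to the paper's $\|\lambda_B^*\|_1\le\sqrt{|B(\varepsilon)|}\,\|\lambda_B^*\|_2$). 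One correction to your envelope justification: $\varepsilon\|P_\Omega D_{r,a}(\pi)\|_2$ is not jointly convex in $(\pi,\varepsilon)$, so $V^*$ need not be convex and a convex subdifferential characterization does not apply; the right tool is a Gol'shtein-type directional-derivative formula (a min over primal optima of a max over dual optima of $\partial_\varepsilon\mathcal L$), and your remark that $\sigma_{\min}(G_B^\Pi(\varepsilon))>0$ makes the multiplier unique is exactly what then pins the slope to $\partial_\varepsilon\mathcal L(\pi^*(\varepsilon),\lambda^*(\varepsilon);\varepsilon)$---a point the paper leaves implicit.
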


\begin{proof}
The stated radius bound guarantees that every perturbation \(e\)
satisfying
\[
\mathbf 1^\top e=0,
\qquad
\|e\|_2\le\varepsilon
\]
also satisfies \(\mu_0+e\ge0\). Hence the simplex nonnegativity
constraints do not further restrict the tangent-space ball, yielding
the exact representation \(h_{r,a}(\pi,\varepsilon)\ge0\) used above.

For the Lagrangian
\[
\mathcal L(\pi,\lambda;\varepsilon)
=
C(\pi)
-
\sum_{(r,a)\in\mathcal J}
\lambda_{r,a}h_{r,a}(\pi,\varepsilon),
\]
Assumption~\ref{ass:relative_slater} gives strong duality and the
existence of KKT multipliers. At a differentiability point of the value
function, standard value-function sensitivity gives
\[
\frac{dV^*}{d\varepsilon}(\varepsilon)
=
\frac{\partial\mathcal L}{\partial\varepsilon}
\bigl(
\pi^*(\varepsilon),
\lambda^*(\varepsilon);
\varepsilon
\bigr).
\]
Since
\[
\frac{\partial h_{r,a}}{\partial\varepsilon}
=
-\|P_\Omega D_{r,a}(\pi)\|_2,
\]
we obtain
\[
\frac{dV^*}{d\varepsilon}(\varepsilon)
=
\sum_{(r,a)\in\mathcal J}
\lambda_{r,a}^*(\varepsilon)
\|P_\Omega D_{r,a}(\pi^*(\varepsilon))\|_2.
\]
Complementary slackness eliminates the inactive constraints, giving the
equality in \eqref{eq:value_slope_structural_bound}. In particular, the
slope is nonnegative.

For every \((r,a)\in\mathcal J\),
\[
\|P_\Omega D_{r,a}(\pi^*(\varepsilon))\|_2
\le
\|D_{r,a}(\pi^*(\varepsilon))\|_2
\le
\sqrt{|\Omega|}\,\Delta_{\max},
\]
because each component of \(D_{r,a}\) is a convex combination of the
quantities appearing in the definition of \(\Delta_{\max}\). Therefore,
\[
\frac{dV^*}{d\varepsilon}(\varepsilon)
\le
\sqrt{|\Omega|}\,\Delta_{\max}
\|\lambda_B^*(\varepsilon)\|_1.
\]

The KKT stationarity condition can be written as
\[
0
=
\nabla C(\pi^*(\varepsilon))
-
\sum_{(r,a)\in B(\varepsilon)}
\lambda_{r,a}^*(\varepsilon)
\nabla_\pi h_{r,a}(\pi^*(\varepsilon),\varepsilon)
+n,
\]
where \(n\) belongs to the normal cone of the policy simplex at
\(\pi^*(\varepsilon)\). Since \(T_\Pi\) is the tangent space of the
minimal face containing \(\pi^*(\varepsilon)\), \(P_\Pi n=0\).
Projecting stationarity onto \(T_\Pi\) therefore gives
\[
P_\Pi\nabla C(\pi^*(\varepsilon))
=
G_B^\Pi(\varepsilon)\lambda_B^*(\varepsilon).
\]
Hence, if
\(\sigma_{\min}(G_B^\Pi(\varepsilon))>0\),
\[
\|\lambda_B^*(\varepsilon)\|_2
\le
\frac{
\|P_\Pi\nabla C(\pi^*(\varepsilon))\|_2
}{
\sigma_{\min}(G_B^\Pi(\varepsilon))
}.
\]
Finally,
\[
\|\lambda_B^*(\varepsilon)\|_1
\le
\sqrt{|B(\varepsilon)|}\,
\|\lambda_B^*(\varepsilon)\|_2,
\]
which proves the stated bound.
\end{proof}

This is a local result: nondifferentiability may occur when the active
set changes. The condition
\(\sigma_{\min}(G_B^\Pi(\varepsilon))>0\) is a LICQ-type condition
relative to the minimal face of the policy simplex; it requires the
projected active obedience gradients to be linearly independent. As
\(\sigma_{\min}(G_B^\Pi(\varepsilon))\) becomes small, these gradients
become nearly linearly dependent and the bound permits a large local
sensitivity to the robustness radius. The remaining factors reflect the
magnitude and number of active obedience constraints. If
\(B(\varepsilon)=\varnothing\), then
\[
\frac{dV^*}{d\varepsilon}(\varepsilon)=0.
\]

\section{Simulation}
\label{sec:simulation}

This section illustrates how stronger robustness requirements shrink the
feasible region and can change the optimizer. We first present a minimal
two-state, two-profile geometric example, and then turn to a Monte Carlo
study.

\subsection{Geometric example}
\label{subsec:simulation_geometry}


We illustrate a specific example, where
\[
E=\{1,2,3\},\quad
\Omega=\{\omega_1,\omega_2\},\quad
\mu_0(\omega_1)=\mu_0(\omega_2)=\tfrac12,
\]
and $X=\{x^{(1)},x^{(2)}\}$ with $x^{(1)}=(0.55,\,0.25,\,0.20)^\top$ and $x^{(2)}=(0.25,\,0.55,\,0.20)^\top$.
In this small example, we can write \(p_t:=\pi(x^{(1)}\mid\omega_t)\), \(t=1,2\), and identify the
policy with \(p=(p_1,p_2)\in[0,1]^2\).

Using the affine latency model from \eqref{eq:affine_latency}, let
\[
\begin{array}{c|ccc}
 & e=1 & e=2 & e=3 \\ \hline
\omega_1 & 0.28\,x+0.50 & 0.72\,x+1.28 & 0.46\,x+0.94 \\
\omega_2 & 0.74\,x+1.26 & 0.27\,x+0.52 & 0.44\,x+0.96
\end{array}
\]
so that edge \(1\) is relatively attractive in state \(\omega_1\) and
edge \(2\) is relatively attractive in state \(\omega_2\).



To provide a geometric interpretation to the feasible set of the robust information design problem, we consider the policy space $p \in [0,1]^2$ and illustrate the robust obedience sets.
Figure~\ref{fig:robust_geometry} shows the six nominal boundaries
\(A_{r,a}(p;\mu_0)=0\), the nominal feasible region
\(\OC(\mu_0)\), and the robust obedience regions
\(\overline{\OC}_\varepsilon(\mu_0)\) for increasing
\(\varepsilon\), computed from
\(A_{r,a}(p;\mu_0)-\varepsilon\|P_\Omega D_{r,a}(p)\|_2\ge0\).
Increasing \(\varepsilon\) produces a nested family of smaller feasible sets.
For sufficiently large $\varepsilon$, the nominal optimizer becomes infeasible.

Let \(p^\star(0)\) denote the nominal optimizer, and choose
\(
\varepsilon_2 \approx\,\rho\bigl(p^\star(0)\bigr).
\)
Then \(p^\star(0)\) remains feasible up to approximately
\(\varepsilon_2\). For \(\varepsilon_3>\varepsilon_2\), the nominal
optimizer is no longer feasible, while the robust obedient set
remains nonempty. The corresponding optimizer \(p^\star(\varepsilon_3)\)
is marked in the figure.

\begin{figure}[t]
\centering
\includegraphics[width=0.75\columnwidth]{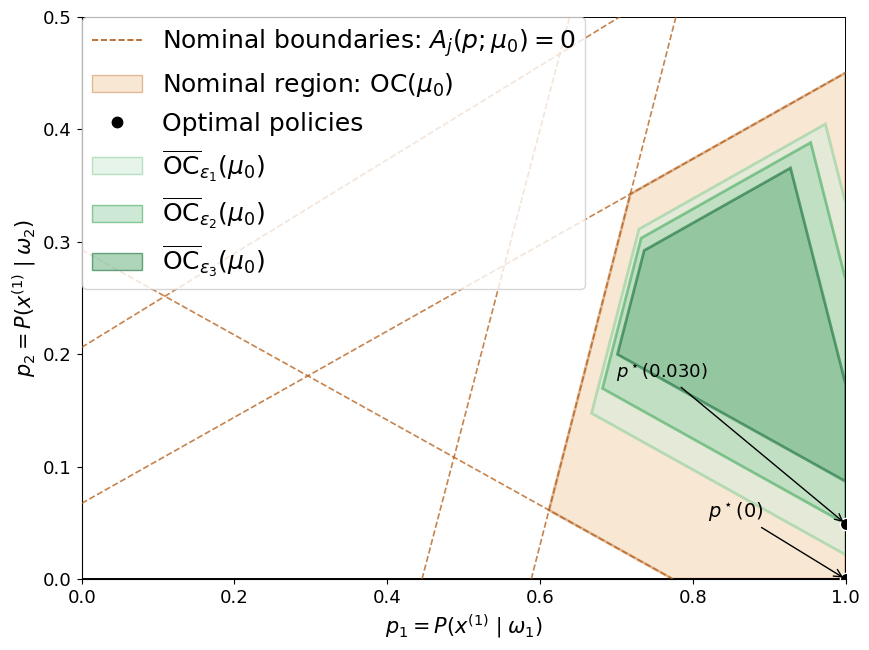}
\caption{
Obedience boundaries (orange dashed), the nominal feasible region
$\OC(\mu_0)$, and shrinking robust regions
$\overline{\OC}_{\varepsilon}(\mu_0)$ (green) for increasing
$\varepsilon$. The nominal optimizer $p^\star(0)$ becomes
infeasible beyond $\varepsilon_2\approx \rho(p^\star(0))$,
and the optimizer at a larger robustness level is also shown.
}
\label{fig:robust_geometry}
\vspace{-0.5em}
\end{figure}
\subsection{Monte Carlo example}
\label{subsec:simulation_cost}

We next investigate sensitivity of the robust information design problem numerically by studying how
the robust value \(V^\star(\varepsilon)\) changes with \(\varepsilon\)
across randomly generated instances with \(|E|=5\), \(|\Omega|=5\), and
\(\mu_0(\omega)=\tfrac15\) for all \(\omega\in\Omega\). The
recommendation profile set $X$ consists of one uniform profile and five
edge-concentrated profiles:
\begin{align*}
x^{(1)}=(0.2,0.2,0.2,0.2,0.2),&
\hspace{1.5pt} x^{(2)}=(0.6,0.1,0.1,0.1,0.1),\\
x^{(3)}=(0.1,0.6,0.1,0.1,0.1),&
\hspace{1.5pt} x^{(4)}=(0.1,0.1,0.6,0.1,0.1),\\
x^{(5)}=(0.1,0.1,0.1,0.6,0.1),&
\hspace{1.5pt} x^{(6)}=(0.1,0.1,0.1,0.1,0.6).
\end{align*}

Using the affine latency model from \eqref{eq:affine_latency},
we generate structured instances in which state \(\omega_e\) favors edge \(e\):
the favored edge is assigned lower slope and intercept ranges, while
the remaining edges are assigned larger ranges. For each instance and each
$\varepsilon$, we solve the robust design problem over the state-profile probabilities
\(\pi(x^{(k)}\mid\omega)\), subject to the simplex constraints and the
robust obedience constraints for all $(r,a)\in\mathcal J$, using
\eqref{eq:realized_cost_difference}.
Over the radius range above, we impose
\(
A_{r,a}(\pi;\mu_0)-\varepsilon
\|P_\Omega D_{r,a}(\pi)\|_2\ge0,
\)
which is second-order-cone representable.


We generate \(100\) instances and retain those that are feasible at
\(\varepsilon=0\). For each retained instance \(i\), let
\(V_i^\star(\varepsilon)\) denote the optimal value at radius $\varepsilon$.
Fig.~\ref{fig:excess_robust_cost} plots
the difference between the value of the nominal information design problem and the robust information design problem
across feasible instances. In blue are the individual value functions trajectories (note their termination when $\overline{\OC}_\varepsilon(\mu_0)$ becomes empty), and overlaid are statistical measures of the data at each $\varepsilon$ including boxplots, mean,
and median are shown. Consistent with the sensitivity discussion above, the excess robust cost is increasing in \(\varepsilon\), and the dispersion across instances also grows.
This effect is reduced in aggregate as problems become infeasible.

\begin{figure}[t]
\centering
\includegraphics[width=0.80\columnwidth]{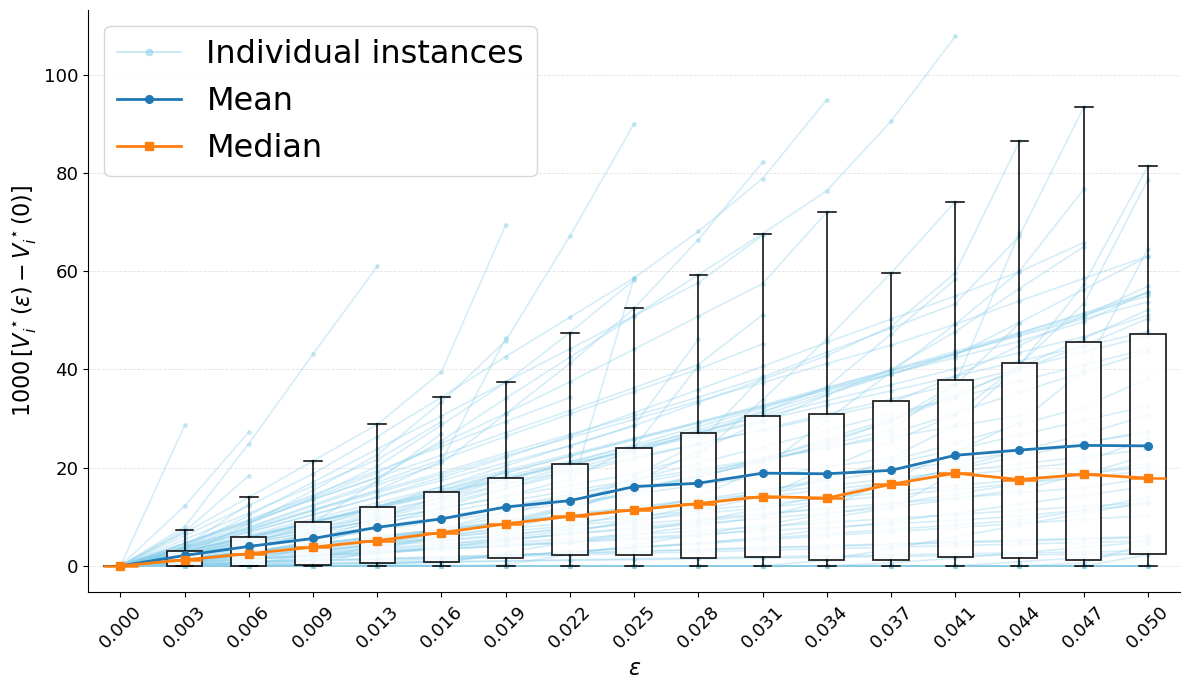}
\caption{
Excess robust cost $\Delta V_i^\star(\varepsilon)$ (scaled by $1000$)
across feasible instances. Light blue curves show individual
trajectories; boxplots, mean, and median are also shown.
}
\label{fig:excess_robust_cost}
\vspace{-0.5em}
\end{figure}
\section{Conclusion}

We studied robust obedience in information design for Bayesian congestion
games under heterogeneous beliefs. We characterized policy-level
robustness ranges, identified regimes where the robust obedience region
remains nonempty, and analyzed the sensitivity of the value function.
When \(\overline{\OC}_\varepsilon(\mu_0)=\emptyset\), no signaling
policy is obedient for every belief in \(U_\varepsilon(\mu_0)\); this does
not rule out obedience for individual beliefs within that neighborhood.
Future work will consider more general network topologies, where the
recommendation space is no longer described by the simple simplex
structure available in parallel networks, as well as sharper
characterizations of exact robustness limits beyond the certified bounds
developed here.

\bibliography{references} 
\bibliographystyle{IEEEtran}

\appendix\label{appendix:finite_support_proof}
\noindent\emph{Proof of Proposition~\ref{prop:heterogeneous_closure_short}:}
Since \(\pi\in \overline{\OC}_\varepsilon(\mu_0;X)\), we have
\(A_{r,a}(\pi;\mu,X)\ge 0\) for all \((r,a)\in\mathcal J\) and
all \(\mu\in U_\varepsilon(\mu_0)\).
Fix any agent $\lambda$. Because $\mu^\lambda\in U_\varepsilon(\mu_0)$,
we have
\(
A_{r,a}(\pi;\mu^\lambda,X)\ge 0,
\qquad
\forall (r,a)\in\mathcal J.
\)
Thus agent $\lambda$ is obedient under $\pi$. Since this holds for
every agent, the aggregate flow generated by the heterogeneous
population is obedient as well.
\hfill$\qed$

\noindent\emph{Proof of Proposition~\ref{prop:finite_support_reduction}:}
For each state \(\omega\in\Omega\) and recommendation profile \(x\in X\),
define \(c^\omega(x):=\sum_{e\in E} x_e\,\ell_e^\omega(x_e)\) and, for
each \((r,a)\in\mathcal J\),
\(
g_{r,a}^\omega(x):=
x_r\bigl(\ell_a^\omega(x_a)-\ell_r^\omega(x_r)\bigr).
\)
Then
\begin{align*}
C(\pi;X)
&=
\sum_{\omega\in\Omega}\mu_0(\omega)\int_X c^\omega(x)\,d\pi_\omega(x),\\
A_{r,a}(\pi;\mu)
&=
\sum_{\omega\in\Omega}\mu(\omega)\int_X g_{r,a}^\omega(x)\,d\pi_\omega(x),
\quad
(r,a)\in\mathcal J.
\end{align*}

Thus, for each fixed \(\omega\), the contribution of \(\pi_\omega\) to
the objective and all obedience constraints is fully determined by the
expected value of
\[
\Gamma^\omega(x)
:=
\Bigl(c^\omega(x),\,(g_{r,a}^\omega(x))_{(r,a)\in\mathcal J}\Bigr)
\in \mathbb{R}^{1+|\mathcal J|}.
\]
Since the latency functions are continuous, \(\Gamma^\omega\) is
continuous. Hence \(\int_X \Gamma^\omega(x)\,d\pi_\omega(x)\) lies in
the convex hull of \(\Gamma^\omega(X)\). By Carath\'eodory's theorem,
there exist points \(x^{\omega,1},\dots,x^{\omega,L_\omega}\in X\),
with \(L_\omega\le |\mathcal J|+2\), and weights
\(\lambda_{\omega,1},\dots,\lambda_{\omega,L_\omega}\ge 0\) satisfying
\(\sum_{j=1}^{L_\omega}\lambda_{\omega,j}=1\), such that
\[
\int_X \Gamma^\omega(x)\,d\pi_\omega(x)
=
\sum_{j=1}^{L_\omega}\lambda_{\omega,j}\Gamma^\omega(x^{\omega,j}).
\]

Let \(\widetilde\pi_\omega\) be the finitely supported probability
measure that places mass \(\lambda_{\omega,j}\) on \(x^{\omega,j}\),
\(j=1,\dots,L_\omega\). Then
\begin{align*}
\int_X c^\omega(x)\,d\widetilde\pi_\omega(x)
&=
\int_X c^\omega(x)\,d\pi_\omega(x),\\
\int_X g_{r,a}^\omega(x)\,d\widetilde\pi_\omega(x)
&=
\int_X g_{r,a}^\omega(x)\,d\pi_\omega(x),
\quad (r,a)\in\mathcal J.
\end{align*}
Applying this construction for each \(\omega\in\Omega\) and taking the
union of the resulting supports yields a finite set
\(\widetilde X\subseteq X\) with
\(\widetilde\pi_\omega\in\Delta(\widetilde X)\) for all \(\omega\in\Omega\).
The preserved statewise cost and deviation terms then imply
\(C(\widetilde\pi;\widetilde X)=C(\pi;X)\) and
\(A_{r,a}(\widetilde\pi;\mu)=A_{r,a}(\pi;\mu)\) for all
\((r,a)\in\mathcal J\) and \(\mu\in\Delta(\Omega)\). This is a standard
Carath\'eodory-type finite-support argument; see, e.g.,
\cite{boyd2004convex,kamenica2011bayesian}.
\hfill\qed


\end{document}